\newcommand{\keywords}[1]{\par\addvspace\baselineskip
\noindent\keywordname\enspace\ignorespaces#1}
\begin{document}
\title{A generalisation of the Gilbert-Varshamov bound and its
asymptotic evaluation}
\titlerunning{A generalisation of the GV bound and its asymptotic evaluation}
\author{Ludo Tolhuizen} \email{}\institute{Philips Research Europe, Eindhoven, The
Netherlands} \maketitle

\begin{abstract}
The Gilbert-Varshamov (GV) lower bound on the maximum cardinality of
a $q$-ary code of length $n$ with minimum Hamming distance at least
$d$ can be obtained by application of Tur\'{a}n's theorem to the
graph with vertex set $\{0,1,\ldots ,q-1\}^n$ in which two vertices
are joined if and only if their Hamming distance is at least $d$. We
generalize the GV bound by applying Tur\'{a}n's theorem to the graph
with vertex set $C^n$, where $C$ is a $q$-ary code of length $m$ and
two vertices are joined if and only if their Hamming distance at
least $d$. We asymptotically evaluate the resulting bound for
$n\rightarrow \infty$ and $d\sim \delta mn$ for fixed $\delta >0$,
and derive conditions on the distance distribution of $C$ that are
necessary and sufficient for the asymptotic generalized bound to
beat the asymptotic GV bound. By invoking the Delsarte inequalities,
we conclude that no improvement on the asymptotic GV bound is
obtained.
\\ By using a sharpening of
Tur\'{a}n's theorem due to Caro and Wei, we improve on our bound. It
is undecided if there exists a code $C$ for which the improved bound
can beat the asymptotic GV bound.
 \keywords{Error-correcting codes, minimum Hamming
distance, Gilbert-Varshamov bound, Tur\'{a}n's theorem, Delsarte
inequalities}
\end{abstract}
\section{Introduction}
Let $A_q(n,d)$ denote the maximum cardinality of code of length $n$
and minimum Hamming distance at least $d$ over an alphabet $Q$ with
$q$ letters. Moreover, for $0\leq\delta\leq 1$, let
$\alpha_q(\delta)$ denote the limes superior of the maximum rate of
$q$-ary codes of relative distance $\delta$, that is,
\[ \alpha_q(\delta) = \limsup_{n\rightarrow\infty} \frac{1}{n}
\log_q A_q(n,\delta n) .\] According to the asymptotic Plotkin bound
\cite[Thm.\ 5.2.5]{10:vL} $\alpha_q(\delta)=0$ for $\delta\geq
1-\frac{1}{q}$; for $0<\delta < 1-\frac{1}{q}$, the value of
$\alpha_q(\delta)$ is unknown.

The Gilbert-Varshamov (GV) bound  \cite[Thm.\ 5.1.7]{10:vL} states
that
 \[ A_q(n,d) \geq q^n \left/ \sum_{i=0}^{d-1} {n\choose i}(q-1)^i.
 \right. \]
The asymptotic version of the GV bound \cite[Thm.\ 5.1.9]{10:vL}
reads as follows:
\begin{equation}\label{10:eq:gvas}
\mbox{ for } 0 \leq \delta \leq 1-\frac{1}{q}, \mbox{ we have that }
\alpha_q(\delta) \geq 1-h_q(\delta),
\end{equation}
 where $h_q$ is the $q$-ary entropy function, defined as
  \[ h_q(x) = -x\log_q(x) - (1-x)\log_q(1-x) + x\log_q(q-1) . \]
To the best of the author's knowledge, no lower bound on
$\alpha_q(\delta)$ improving on (\ref{10:eq:gvas}) is known for $q <
46$. For an extensive survey on literature on the Gilbert-Varshamov
 bound and improvements on it, we refer to \cite{10:Jiang}.

 In \cite{10:Tol}, it was observed that the GV bound can be obtained by
 application of Tur\'{a}n's theorem \cite[Thm.\ 3.2.1]{10:AlonSpen}
 to the graph with vertex set $Q^n$, in which two vertices are joined by
 an edge if and only if their Hamming distance is at least $d$.
 By using this graph-theoretical approach and applying a refined version of
 Tur\'{a}n's theorem for locally sparse graphs, Jiang and Vardy \cite{10:Jiang}
 obtained an improvement of the GV bound for binary codes with a multiplicative factor
 $n$.
This result was generalized to $q$-ary codes by Vu and Wu
 who proved the following \cite[Thm.\ 1.2]{10:Vu}.
 \begin{theorem}\label{10:th:Vu}
 Let q be a fixed positive integer and let $\beta, \beta^{\prime}$
 be constants satisfying $0< \beta^{\prime} < \beta <
 \frac{q-1}{q}$. There is a positive constant $c$ depending on $q$
 and $\beta$ such that for any $\beta^{\prime}n < d < \beta n$,
 \[ A_q(n,d+1) \geq c \frac{q^n}{\sum_{i=0}^{d} {n\choose i}(q-1)^i}
 n . \]
 \end{theorem}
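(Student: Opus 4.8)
The plan is to follow the graph-theoretic route sketched in the introduction, now applied with a refinement of Tur\'{a}n's theorem for \emph{locally sparse} graphs in place of Tur\'{a}n's theorem itself. Fix $d$ with $\beta' n<d<\beta n$, write $\delta=d/n$, and let $G=G_{n,d}$ be the graph on vertex set $Q^n$ in which $x$ and $y$ are adjacent exactly when $0<d_H(x,y)\le d$. An independent set in $G$ is precisely a code of length $n$ and minimum distance at least $d+1$, so $A_q(n,d+1)=\alpha(G)$. The group $Q^n$ acts on $G$ by translation, so $G$ is vertex-transitive and regular of degree $D:=V_q(n,d)-1$, where $V_q(n,d)=\sum_{i=0}^{d}{n\choose i}(q-1)^i$ is the size of a Hamming ball of radius $d$ and is exactly the denominator in the claimed inequality. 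Plain Tur\'{a}n gives $\alpha(G)\ge q^n/(D+1)=q^n/V_q(n,d)$, the GV bound; the task is the extra factor $n$.

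To produce that factor I would invoke the strengthening of Tur\'{a}n's theorem for locally sparse graphs used by Jiang and Vardy in the binary case (extractable from the method of Ajtai, Koml\'{o}s, Pintz, Spencer and Szemer\'{e}di): if a graph has $N$ vertices, maximum degree at most $\Delta$, and the subgraph induced on the neighbourhood of every vertex has at most $\Delta^2/f$ edges, then its independence number is at least $c_0\,N(\ln f)/\Delta$ for an absolute constant $c_0>0$ (in the admissible range, which will include $1\le f\le\Delta$). Applied to $G$ this reduces the theorem to a single estimate: the neighbourhood of any vertex of $G$ spans at most $D^2q^{-\epsilon n}$ edges, for some $\epsilon>0$ depending only on $q,\beta',\beta$. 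Indeed then $f=q^{\epsilon n}$, $\ln f=\epsilon n\ln q$, and $\alpha(G)\ge c_0 q^n(\ln f)/D\ge c\,n\,q^n/V_q(n,d)$ with $c=c_0\epsilon\ln q$. By vertex-transitivity it is enough to count at the vertex $0$: the edges in its neighbourhood are, up to a factor $\frac12$, the ordered pairs $(y,z)\in Q^n\times Q^n$ with $\mathrm{wt}(y)\le d$, $\mathrm{wt}(z)\le d$, and $d_H(y,z)\le d$.

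I would estimate this number via \emph{joint types}: classify a pair $(y,z)$ by the numbers $n_{00},n_{0*},n_{*0},n_{=},n_{\ne}$ of coordinates $i$ at which, respectively, $y_i=z_i=0$; $y_i=0\ne z_i$; $y_i\ne 0=z_i$; $0\ne y_i=z_i$; and $y_i\ne z_i$ with both nonzero. The number of pairs of a fixed type is a multinomial coefficient times an explicit product of powers of $q-1$ and $q-2$, and the three constraints become the linear inequalities $n_{*0}+n_{=}+n_{\ne}\le d$, $n_{0*}+n_{=}+n_{\ne}\le d$, and $n_{0*}+n_{*0}+n_{\ne}\le d$. Passing to logarithms base $q$ and letting $n\to\infty$, the exponential rate of the full count equals the maximum of an explicit concave entropy function over the compact polytope cut out by these inequalities, while $D^2$ has rate $2h_q(\delta)$. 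The heart of the proof is to show that this maximum is \emph{strictly} below $2h_q(\delta)$, by a gap $\epsilon$ that stays bounded away from $0$ for all $\delta\in[\beta',\beta]$. This is precisely where $\beta<\frac{q-1}{q}$ enters: for two independent random words of weight $\approx\delta n$ the expected Hamming distance is $(2-\frac{q}{q-1}\delta)\delta n$, which exceeds $\delta n$ exactly when $\delta<\frac{q-1}{q}$, so the constraint $d_H(y,z)\le d$ genuinely discards an exponential fraction of the $D^2$ pairs --- a Plotkin-type phenomenon. Carrying out the optimisation (Lagrange multipliers, or a reduction to the known asymptotics for the intersection of two Hamming balls together with the $q$-ary Plotkin computation) yields such an $\epsilon=\epsilon(q,\beta',\beta)>0$, and a compactness argument on the closed interval $[\beta',\beta]$ makes it uniform, so that one constant $c$ serves the whole range of $d$.

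The step I expect to be the main obstacle is this entropy maximisation. One must check not merely that some non-trivial type beats the ``trivial'' one (taking $z=0$ already gives rate $h_q(\delta)<2h_q(\delta)$) but that \emph{no} boundary configuration of the polytope comes arbitrarily close to $2h_q(\delta)$, and one must control how the gap degrades as $\delta\to 0$ --- which is why the range must stay away from $0$ and why $c$ depends on $\beta'$ as well as on $q$ and $\beta$. A secondary, essentially routine, point is to verify the hypotheses of the locally-sparse independence theorem, in particular $f=q^{\epsilon n}\le D=\Delta$; this holds because $\epsilon\le h_q(\delta)$, the constrained count always dominating its sub-count with $z=0$. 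Given the estimate, the chain $A_q(n,d+1)=\alpha(G)\ge c_0 q^n(\ln f)/D\ge c\,n\,q^n/V_q(n,d)$ finishes the proof.
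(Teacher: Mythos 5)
This theorem is not proved in the paper at all: it is quoted verbatim from Vu and Wu \cite{10:Vu}, so there is no internal proof to compare yours against. Your outline does follow the strategy that the introduction attributes to \cite{10:Jiang} and \cite{10:Vu}: take the graph on $Q^n$ with adjacency $0<d_H(x,y)\le d$, whose independent sets are exactly the codes of minimum distance at least $d+1$, and apply the independence-number bound for locally sparse graphs, thereby reducing the whole theorem to the single estimate that the neighbourhood of a vertex spans at most $D^2q^{-\epsilon n}$ edges with $\epsilon=\epsilon(q,\beta',\beta)>0$. That reduction is correct, as is your identification of where the hypotheses $\beta<\frac{q-1}{q}$ and $\beta'>0$ enter.

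The genuine gap is that the step you yourself call the heart of the proof is asserted rather than carried out. The first-moment computation you give --- two independent words of weight $\delta n$ have expected distance $(2-\frac{q}{q-1}\delta)\delta n>\delta n$ --- shows only that the \emph{typical} pair violates $d_H(y,z)\le d$; it does not show that the pairs satisfying it form an exponentially small fraction of $D^2$, which is a large-deviation statement. For that you must actually prove that the maximum of the type-entropy functional over the polytope $\{\,n_{*0}+n_{=}+n_{\ne}\le d,\; n_{0*}+n_{=}+n_{\ne}\le d,\; n_{0*}+n_{*0}+n_{\ne}\le d\,\}$ is strictly below $2h_q(\delta)$, uniformly for $\delta\in[\beta',\beta]$; locating the maximiser and verifying strictness on the boundary of the constraint set is precisely where almost all of the work in \cite{10:Jiang} and \cite{10:Vu} lies, and until it is done you have a correct plan rather than a proof. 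Two smaller points: the hypotheses of the locally-sparse independence theorem (in particular the admissible range of $f$ relative to $\Delta$) must be checked against the exact statement you import, and $z=0$ is not a vertex of the neighbourhood of $0$ (adjacency requires positive distance), so the lower bound on the neighbourhood edge count that you use to argue $f\le\Delta$ should instead be obtained by fixing $z$ to be, say, a weight-one word.
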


 In this paper, we use the graph-theoretical approach to obtain a generalization of the GV
 bound. Again, we consider a graph in which two vertices are
 joined by an edge if and only if their Hamming distance is at least
 $d$. The vertex set does not equal $Q^n$, but it equals
 $C^n$, where $C\subseteq Q^m$ is a fixed $q$-ary code of length
 $m$. We use Tur\'{a}n's theorem to obtain a lower bound on the size
 of the largest clique in this graph, and employ a bounding technique from
 \cite{10:Bur} to obtain a manageable asymptotic expression. We analyze
 the generalized asymptotic GV bound,  and by employing the Delsarte
 inequalities \cite[Sec.\ 5.3]{10:vL},
 we infer that it cannot improve the
 asymptotic GV bound. We end with an improvement of our bound based
 on a sharpening of Tur\'{a}n's theorem due to Caro and Wei, and derive a necessary and sufficient condition
 on $C$ for this improved bound to beat the asymptotic GV bound. We have not been able to
 decide if there exist codes $C$  satisfying this condition.

%
\section{Prerequisites}
\subsection{Tur\'{a}n's theorem}
 Let $G$ be a simple graph without loops. A {\em
clique} in $G$ is a set of vertices of which each pair is joined by
an edge. It is intuitively clear that a graph with many edges should
contain a large clique. This is quantified by Tur\'{a}n's theorem,
of which we use the following version (for a proof, see e.g.\
\cite[Thm. 3.2.1]{10:AlonSpen}).
\begin{theorem}\label{10:th:Turan}
A simple graph without loops with $v$ vertices and $e$ edges
contains a clique of size at least $v^2 / \left( v^2-2e \right)$.
\end{theorem}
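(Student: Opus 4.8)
The plan is to give a short probabilistic argument in the spirit of \cite{10:AlonSpen}. Let $G$ be a simple loopless graph with vertex set $V$, $|V|=v$, and $e$ edges, and for $x\in V$ write $d(x)$ for the degree of $x$. I would choose a uniformly random linear ordering (permutation) $\sigma$ of $V$ and set
\[ C_\sigma = \{\, x\in V : x \text{ is adjacent to every } y\in V \text{ with } \sigma(y) < \sigma(x)\,\}. \]
The first step is to check that $C_\sigma$ is always a clique: if $x,y\in C_\sigma$ and, say, $\sigma(x)<\sigma(y)$, then the defining property of $y$ forces $x$ and $y$ to be adjacent. Hence $|C_\sigma|$ never exceeds the clique number of $G$, so it suffices to bound $\mathbb{E}\!\left[|C_\sigma|\right]$ from below.

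Next I would compute $\Pr[x\in C_\sigma]$ for a fixed vertex $x$. The event $x\in C_\sigma$ occurs exactly when $x$ precedes, in the order $\sigma$, all of its $v-1-d(x)$ non-neighbours; equivalently, $x$ is the $\sigma$-smallest element of the set consisting of $x$ together with its non-neighbours, a set of size $v-d(x)$. Since $\sigma$ is uniform, this probability equals $1/(v-d(x))$. (Note that $v-d(x)\ge 1$ always, so there is no division by zero.) By linearity of expectation,
\[ \mathbb{E}\!\left[|C_\sigma|\right] = \sum_{x\in V} \frac{1}{v-d(x)}. \]

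Finally, I would apply the Cauchy--Schwarz inequality (equivalently, convexity of $t\mapsto 1/t$) to the positive numbers $v-d(x)$:
\[ \Big(\sum_{x\in V}\frac{1}{v-d(x)}\Big)\Big(\sum_{x\in V}(v-d(x))\Big) \ge v^2 . \]
Since $\sum_{x\in V}(v-d(x)) = v^2 - \sum_{x\in V} d(x) = v^2 - 2e$, this yields $\mathbb{E}\!\left[|C_\sigma|\right] \ge v^2/(v^2-2e)$, and therefore some ordering $\sigma$ produces a clique of at least this size, which is the claim. I do not expect a genuine obstacle here; the only points needing care are the verification that $C_\sigma$ is a clique and the bookkeeping identity $\sum_{x}(v-d(x)) = v^2-2e$. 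An alternative route would be to prove the classical independent-set form of the bound (the Caro--Wei estimate $\sum_x 1/(d(x)+1)$) and then pass to the complementary graph, but the argument above proves the clique statement directly.
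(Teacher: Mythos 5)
Your proof is correct: the random-permutation argument establishes the Caro--Wei bound $\sum_{x}1/(v-d(x))$ (the paper's Theorem~\ref{10:th:Turansum}), and the Cauchy--Schwarz step together with $\sum_x(v-d(x))=v^2-2e$ then yields the stated clique bound. This is exactly the route the paper itself indicates --- it cites Alon--Spencer for the proof and remarks in Section~4 that Theorem~\ref{10:th:Turansum} implies Theorem~\ref{10:th:Turan} via a convexity argument --- so no issues.
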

\subsection{Distance enumerator of a code}
Let $C\subseteq Q^m$. For 0$\leq j\leq m$, we define $B_j$ as
\[ B_j = \frac{1}{|C|} | \{ (x,y)\in C^2 \mid d(x,y)=j \}| . \]
We define the distance enumerator polynomial $B(x)$ of $C$ as $B(x)
= \sum_{j=0}^m B_j x^j$. It is easy to check that for each $n\geq
1$, the code $C^n$ has distance  enumerator polynomial $\left(
B(x)\right)^n$.


\subsection{A bounding lemma}
The following lemma is similar to a bounding technique that can be
found in \cite{10:Bur}.
\begin{lemma}\label{10:lem:Bur} Let $f(x)=\sum_{j=0}^n f_j x^j$ be a polynomial with non-negative
coefficients, i.e., $f_j\geq 0$ for all $j$. For each $k, 0\leq
k\leq n$ and each $x\in (0,1]$, we have that \[ \sum_{j=0}^k f_j
\leq \frac{f(x)}{x^k} . \]
\end{lemma}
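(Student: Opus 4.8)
The plan is to exploit the non-negativity of the coefficients together with the fact that $x \in (0,1]$, so that powers of $x$ shrink (or stay the same) as the exponent grows. First I would observe that for $x \in (0,1]$ and any $j$ with $0 \le j \le k$, we have $x^j \ge x^k$, simply because $x^{j} = x^{k} \cdot x^{j-k}$ with $j - k \le 0$ and $x \le 1$ (equivalently $x^{k-j} \le 1$). Multiplying this inequality by $f_j \ge 0$ — here the hypothesis $f_j \ge 0$ is what lets the inequality survive multiplication — gives $f_j x^k \le f_j x^j$ for each $j \in \{0,1,\dots,k\}$.

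Next I would sum this over $j = 0, 1, \dots, k$, obtaining
\[
x^k \sum_{j=0}^{k} f_j \;=\; \sum_{j=0}^{k} f_j x^k \;\le\; \sum_{j=0}^{k} f_j x^j .
\]
Then I would bound the right-hand side by the full sum $\sum_{j=0}^{n} f_j x^j = f(x)$; this step again uses $f_j x^j \ge 0$ for the discarded terms $j = k+1, \dots, n$ (valid since $x > 0$ and $f_j \ge 0$). Hence $x^k \sum_{j=0}^k f_j \le f(x)$, and dividing through by $x^k > 0$ — legitimate precisely because we excluded $x = 0$ from the interval — yields $\sum_{j=0}^k f_j \le f(x)/x^k$, as claimed.

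There is essentially no obstacle here: the statement is a one-line monotonicity argument. The only points that require the stated hypotheses rather than being automatic are (i) the non-negativity of the coefficients, needed both to preserve the per-term inequality $x^j \ge x^k$ under multiplication by $f_j$ and to justify dropping the tail terms, and (ii) the restriction $x \in (0,1]$ rather than $x \ge 0$, which is needed both for $x^{k-j} \le 1$ and for the final division by $x^k$. I would simply make sure these two uses are flagged explicitly in the write-up.
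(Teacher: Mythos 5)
Your proof is correct and follows essentially the same route as the paper's: the per-term inequality $x^j \ge x^k$ for $0 \le j \le k$, multiplication by $f_j \ge 0$, dropping the nonnegative tail, and dividing by $x^k > 0$. Your write-up is just slightly more explicit about where each hypothesis is used.
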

\begin{proof}
Let $0\leq x\leq 1$.  Whenever $0\leq j\leq k$, we have that
$x^j\geq x^k$, and so $f(x)=\sum_{j=0}^n f_j x^j \geq \sum_{j=0}^k
f_j x^ j \geq \sum_{j=0}^k f_j x^k$.
 $\qed$ \end{proof}
\section{Main result and its proof}
\begin{theorem}\label{10:th:main}
Let $C\subseteq Q^m$ have distance enumerator polynomial $B(x)$. For
each $x\in (0,1]$, we have
\[ \alpha_q(\delta) \geq \frac{1}{m} \log_q \left(
\frac{|C|}{B(x)}\right)  + \delta \log_q x . \]
\end{theorem}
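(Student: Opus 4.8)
The plan is to carry out the graph-theoretic construction announced in the introduction. Fix an integer $n$ and put $d=\lceil \delta mn\rceil$, so that $d/(mn)\to\delta$ as $n\to\infty$; the rounding of $\delta$ times a length to an integer is harmless asymptotically. I would then form the simple loopless graph $G_n$ with vertex set $C^n\subseteq Q^{mn}$ in which two distinct vertices are adjacent exactly when their Hamming distance in $Q^{mn}$ is at least $d$. A clique in $G_n$ is precisely a code of length $mn$ over $Q$ with minimum distance at least $d$, so Theorem \ref{10:th:Turan} applied to $G_n$ gives $A_q(mn,d)\geq v^2/(v^2-2e)$, where $v$ and $e$ denote the numbers of vertices and edges of $G_n$.

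The first step is to evaluate $v$ and $v^2-2e$ through the distance enumerator. Clearly $v=|C|^n$. Since $C^n$ has distance enumerator polynomial $(B(x))^n$, writing $(B(x))^n=\sum_{k=0}^{mn}b_k x^k$, the number of ordered pairs of vertices at Hamming distance exactly $k$ equals $|C|^n b_k$; because $d\geq 1$, every pair counted by $2e$ consists of distinct vertices, so $2e=|C|^n\sum_{k=d}^{mn}b_k$. Using $B(1)=|C|$, hence $\sum_{k=0}^{mn}b_k=|C|^n$, this becomes $2e=|C|^{2n}-|C|^n\sum_{k=0}^{d-1}b_k$, whence $v^2-2e=|C|^n\sum_{k=0}^{d-1}b_k$, which is positive since already the constant term $B_0^n=1$ of $(B(x))^n$ contributes. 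Turán's theorem then yields
\[ A_q(mn,d)\;\geq\;\frac{|C|^{2n}}{|C|^n\sum_{k=0}^{d-1}b_k}\;=\;\frac{|C|^n}{\sum_{k=0}^{d-1}b_k}. \]

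The second step is to bound the denominator. As $(B(x))^n$ has non-negative coefficients, Lemma \ref{10:lem:Bur}, applied with the given $x\in(0,1]$ and with the integer $d-1$ playing the role of the lemma's $k$, gives $\sum_{k=0}^{d-1}b_k\leq (B(x))^n/x^{d-1}$. Combining the two estimates,
\[ A_q(mn,d)\;\geq\;\frac{|C|^n\,x^{d-1}}{(B(x))^n}. \]
Taking $\log_q$ and dividing by $mn$ gives
\[ \frac{1}{mn}\log_q A_q(mn,d)\;\geq\;\frac{1}{m}\log_q\!\left(\frac{|C|}{B(x)}\right)+\frac{d-1}{mn}\log_q x. \]
Letting $n\to\infty$, the last term tends to $\delta\log_q x$; since $\alpha_q(\delta)$ is a limit superior over all lengths, it is at least the limit superior along the subsequence of lengths $mn$, and the claim follows. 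The one point I expect to require care is the bookkeeping for $v^2-2e$ — in particular using $B(1)=|C|$ so that the dominant terms $|C|^{2n}$ cancel and leave exactly the partial sum $\sum_{k=0}^{d-1}b_k$ — while the off-by-one in the exponent $x^{d-1}$ and the integer rounding of $\delta mn$ both disappear in the limit.
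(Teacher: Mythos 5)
Your proposal is correct and follows essentially the same route as the paper: apply Tur\'{a}n's theorem to the graph on $C^n$ with edges at Hamming distance $\geq\lceil\delta mn\rceil$, compute $v^2-2e$ via the distance enumerator $(B(x))^n$, and bound the partial sum $\sum_{k=0}^{d-1}b_k$ with Lemma~\ref{10:lem:Bur}. If anything, your bookkeeping is slightly more careful than the paper's (which writes $B_{C^n}(x)/x^{d-1}=(B(x)/x^{\delta m})^n$ as an exact equality), since you explicitly dispose of the $x^{d-1}$ versus $x^{\delta mn}$ discrepancy and the integer rounding in the limit $n\to\infty$.
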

\begin{proof}
For each integer $n$, we consider the graph $G$ with vertex set
$C^n$ in which two vertices are joined by an edge if and only if
they have Hamming distance at least $d=\lceil \delta mn\rceil$. The
number of edges $e$ thus is given by
\[ e = |C|^n\frac{1}{2} \sum_{j=d}^m B_j^{(n)} , \]
where $B_j^{(n)}$ is the $j$-th coefficient of the distance
enumerator of $C^n$. Application of  Tur\'{a}n's bound to $G$ yields
the existence of a subcode $D$ of $C^n$ with minimum distance at
least $d$ and cardinality at least
\[ |C|^n / \sum_{j=0}^{d-1} B_j^{(n)} . \]
We now invoke Lemma~\ref{10:lem:Bur} and find that for each $x\in
(0,1]$,
\[ \sum_{j=0}^{d-1} B_j^{(n)} \leq \frac{B_{C^n}(x)}{x^{d-1}} =
\left( \frac{B(x)}{x^{{\delta}m}} \right)^n  . \] Combining the
above inequalities, we find that the code $D$ of length $mn$
satisfies
 \[ \frac{1}{mn}\log_q |D| \geq   \frac{1}{m}\log_q
 \left( \frac{|C|}{B(x)} \right) + \delta \log_q x. \]
 $\qed$
\end{proof}
It can be readily verified that Theorem~\ref{10:th:main} with $m=1$,
$C=Q$ and $x= \delta / (q-1)(1-\delta)$ yields the asymptotic GV
bound (\ref{10:eq:gvas}).
\\ \\
Theorem~\ref{10:th:main} contains a parameter $x$ that can be
optimized over. By straightforward differentiation, one finds that
the optimizing value $x$ satisfies the equation $x B^{\prime}(x) -
\delta m B(x) = 0$.  For a given code $C$, it seems in general a
hopeless task to obtain a closed expression for the largest bound on
$\alpha_q(\delta)$  that can be obtained from
Theorem~\ref{10:th:main}. We set ourselves instead a different goal,
{\em viz.} to determine if Theorem~\ref{10:th:main} can improve on
the asymptotic GV bound. To this end, for each $\delta\in
(0,1-\frac{1}{q})$ and $x\in (0,1)$, we define
\[ F(x,\delta) = \frac{1}{m} \log_q\left( \frac{|C|}{B(x)}\right) +
\delta\log_q x - (1-h_q(\delta)) . \] For a pair $(x,\delta)$
optimizing $F(x,\delta)$, we have that
\[ 0 = \partial F(x,\delta) / \partial \delta = \log_q x + h_q^{\prime}(\delta),
\mbox{ implying that } x = x_{\delta}:= \frac{\delta}{(q-1)(1-\delta)}. \]
Note that $x_{\delta} \leq 1$ whenever $\delta\leq 1-\frac{1}{q}$.
As $\delta\log_q x_{\delta} + h_q(\delta) = -\log_q(1-\delta)$, we
have that
\[ F(x_{\delta},\delta) =
  \frac{1}{m} \log_q (\frac{|C|}{B(x_{\delta})}) - 1 - \log_q(1-\delta) =
   -\frac{1}{m} \log_q\left(
   B(x_{\delta})
   (1-\delta)^m \frac{q^m}{|C|}\right) . \]
As a result, we have proved the following lemma.
\begin{lemma}\label{10:th:condition}
There exists a $\delta$ for which Theorem~\ref{10:th:main} yields an
improvement on the asymptotic GV bound if and only if for some
$\delta\in (0,1-\frac{1}{q})$
   \[
   \frac{q^m}{|C|}(1-\delta)^m B(\frac{\delta}{(q-1)(1-\delta)}) <
   1.
   \]
   \end{lemma}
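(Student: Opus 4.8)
The plan is to reformulate the assertion in terms of the sign of the auxiliary function $F$ and then collapse the joint optimisation over $(x,\delta)$ to a one-dimensional one. By the definition of $F$, Theorem~\ref{10:th:main} beats the asymptotic GV bound (\ref{10:eq:gvas}) at a given $\delta\in(0,1-\frac1q)$ exactly when $F(x,\delta)>0$ for some $x\in(0,1]$; the choice $x=1$ yields only $\alpha_q(\delta)\ge\frac1m\log_q\big(|C|/B(1)\big)=0$ (since $B(1)=\sum_j B_j=|C|$), which never improves on $1-h_q(\delta)$, so we may restrict to $x\in(0,1)$, the domain on which $F$ is defined. Hence the left-hand side of the claimed equivalence holds if and only if $F(x,\delta)>0$ for some $(x,\delta)\in(0,1)\times(0,1-\frac1q)$.

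Next I would fix $x\in(0,1)$ and study the map $\delta\mapsto F(x,\delta)$. Apart from the $\delta$-independent term $\frac1m\log_q(|C|/B(x))$ it equals $\delta\log_q x+h_q(\delta)-1$, and since $h_q''(\delta)=-\frac1{\ln q}\big(\frac1\delta+\frac1{1-\delta}\big)<0$ on $(0,1)$ this map is strictly concave in $\delta$. Therefore it attains its maximum over $(0,1-\frac1q)$ at the unique stationary point, which by $\partial F/\partial\delta=\log_q x+h_q'(\delta)=0$ is the $\delta$ with $x=x_\delta:=\frac{\delta}{(q-1)(1-\delta)}$; this stationary point is interior because $h_q'(0^+)=+\infty$ and $h_q'\to0$ as $\delta\to1-\frac1q$. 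Since $\delta\mapsto x_\delta$ is a strictly increasing bijection of $(0,1-\frac1q)$ onto $(0,1)$, each admissible $x$ is of the form $x_{\delta^{\ast}}$ for a unique $\delta^{\ast}$, and strict concavity gives $F(x_{\delta^{\ast}},\delta^{\ast})=\max_{\delta}F(x,\delta)$. Consequently $F$ is positive somewhere on $(0,1)\times(0,1-\frac1q)$ if and only if $F(x_\delta,\delta)>0$ for some $\delta\in(0,1-\frac1q)$.

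It remains to put $F(x_\delta,\delta)$ into closed form. Substituting $x=x_\delta$ and using the identity $\delta\log_q x_\delta+h_q(\delta)=-\log_q(1-\delta)$ (a short logarithm computation), one finds $F(x_\delta,\delta)=\frac1m\log_q(|C|/B(x_\delta))-1-\log_q(1-\delta)$, which rearranges to $-\frac1m\log_q\big(\frac{q^m}{|C|}(1-\delta)^m B(x_\delta)\big)$. This quantity is strictly positive exactly when $\frac{q^m}{|C|}(1-\delta)^m B\big(\frac{\delta}{(q-1)(1-\delta)}\big)<1$, which is precisely the stated condition; together with the previous paragraph this establishes the equivalence.

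The one step that is not mere bookkeeping is the reduction in the second paragraph: one must make sure that setting $\partial F/\partial\delta$ to zero actually locates the global maximum of $F(x,\cdot)$ rather than just a critical point. Strict concavity of the entropy function $h_q$, together with the boundary behaviour of $h_q'$ that forces the critical $\delta$ to lie in the open interval, is exactly what makes this work; after that, everything reduces to elementary manipulation of logarithms and of the substitution $x_\delta=\frac{\delta}{(q-1)(1-\delta)}$.
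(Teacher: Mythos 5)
Your proof is correct and follows essentially the same route as the paper: the paper derives the lemma by defining the same $F(x,\delta)$, setting $\partial F/\partial\delta=0$ to get $x=x_\delta$, and computing $F(x_\delta,\delta)$ in the same closed form. Your write-up is in fact more careful than the paper's terse "for a pair $(x,\delta)$ optimizing $F$" — you justify via strict concavity of $h_q$ and the boundary behaviour of $h_q'$ that the stationary point is the global maximum of $F(x,\cdot)$ and lies in the open interval, which is exactly the detail the paper leaves implicit.
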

\begin{theorem}\label{10:th:toobad}
The largest lower bound on $\alpha_q(\delta)$ that can be obtained
from \\ Theorem~\ref{10:th:main} is the asymptotic GV bound
$1-h_q(\delta)$.
\end{theorem}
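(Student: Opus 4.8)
The plan is to deduce Theorem~\ref{10:th:toobad} from Lemma~\ref{10:th:condition} by showing that the inequality quoted there is in fact \emph{never} satisfied, i.e. that for every $m\ge 1$, every code $C\subseteq Q^m$ with distance enumerator $B$, and every $\delta\in(0,1-\tfrac1q)$,
\[ \frac{q^m}{|C|}\,(1-\delta)^m\, B\!\left(\tfrac{\delta}{(q-1)(1-\delta)}\right)\;\ge\;1 . \]
Granting this, Lemma~\ref{10:th:condition} says that Theorem~\ref{10:th:main} can never exceed $1-h_q(\delta)$; and since (as noted after Theorem~\ref{10:th:main}) the choice $m=1$, $C=Q$, $x=\delta/((q-1)(1-\delta))$ makes Theorem~\ref{10:th:main} equal to $1-h_q(\delta)$, the largest bound it produces is exactly $1-h_q(\delta)$.

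The heart of the matter is that the left-hand side above is, up to the factor $|C|$, a value of the MacWilliams transform of $B$ at a \emph{nonnegative} argument, so that the Delsarte inequalities apply. Recall \cite[Sec.~5.3]{10:vL} that the dual distance distribution $(B'_0,\dots,B'_m)$ of $C$, defined through
\[ |C|\sum_{k=0}^m B'_k\, z^k \;=\; \bigl(1+(q-1)z\bigr)^m\, B\!\left(\frac{1-z}{1+(q-1)z}\right), \]
satisfies $B'_0=1$ (set $z=0$) and $B'_k\ge 0$ for all $k$ (the latter being precisely the Delsarte inequalities). I would then substitute $z=z_\delta:=1-\tfrac{q\delta}{q-1}$. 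A direct computation shows that $z_\delta\in(0,1)$ exactly on the range $\delta\in(0,1-\tfrac1q)$, that $\tfrac{1-z_\delta}{1+(q-1)z_\delta}=\tfrac{\delta}{(q-1)(1-\delta)}$, and that $1+(q-1)z_\delta=q(1-\delta)$. Feeding these into the identity turns it into
\[ |C|\sum_{k=0}^m B'_k\, z_\delta^{\,k} \;=\; q^m(1-\delta)^m\, B\!\left(\tfrac{\delta}{(q-1)(1-\delta)}\right). \]
To finish, since $z_\delta\ge 0$ and every $B'_k\ge 0$, the sum on the left is at least its $k=0$ term $|C|B'_0=|C|$; dividing by $|C|$ yields precisely the inequality required above.

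I expect the only genuine step to be the identification in the second paragraph — recognising the quantity in Lemma~\ref{10:th:condition} as the MacWilliams transform evaluated at $z_\delta$, and verifying $z_\delta\ge 0$ (equivalently $\delta<1-\tfrac1q$, which is exactly the range in play); once that is in place, nonnegativity of the dual distance distribution does all the work and the rest is routine substitution. It is worth double-checking the endpoints: $z_\delta\to1$ as $\delta\to0$ and $z_\delta\to0$ as $\delta\to1-\tfrac1q$, both inside $[0,1]$, so no separate boundary analysis is needed, and the argument is uniform over the whole interval.
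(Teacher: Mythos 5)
Your proof is correct and is essentially the paper's own argument: both substitute $z=1-\tfrac{q}{q-1}\delta$ to recognise the quantity in Lemma~\ref{10:th:condition} as the MacWilliams transform of $B$ evaluated at a nonnegative point, and then invoke the Delsarte inequalities together with the constant term being $1$ to conclude the condition fails. The only cosmetic difference is that you spell out the substitution identities and the attainment of $1-h_q(\delta)$ at $m=1$, $C=Q$ more explicitly than the paper does.
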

\begin{proof}
By substituting $z=1-\frac{q}{q-1}\delta$ in the condition of
Lemma~\ref{10:th:condition},  we obtain the equivalent condition
that for some $z\in (0,1)$,
   \begin{equation}\label{10:condition2} \frac{1}{|C|} (1+(q-1)z)^m B(\frac{1-z}{1+(q-1)z})
   < 1. \end{equation}
We write the left hand side  of (\ref{10:condition2}) as the
polynomial $A(z)= \sum_{i=0}^m A_i z^i$.  By choosing $z=0$, we find
that $A_0=A(0)=\frac{1}{|C|}B(1)=1$. According to the Delsarte
inequalities, that form the basis of the linear programming bound
\cite[Sec.\ 5.3]{10:vL},  $A_i\geq 0$ for all $i$. So for $z\geq 0$,
we have that $A(z)\geq A_0=1. \;\;\;\;\qed$
\end{proof}
\section{Extension of the main result}
We extend our result by using the sharpening of Tur\'{a}n's theorem
from Theorem~\ref{10:th:Turansum} below; an elegant proof,
attributed to Caro and Wei, can be found in \cite[p.\
95]{10:AlonSpen}.
\begin{theorem}\label{10:th:Turansum}
Let $G=(V,E)$ be a simple graphs without loops. For each  $v\in V$,
let $d_v$ be the number of neighbours of $v$. Then $G$ contains a
clique of size at least
 \[ \sum_{v\in V} \frac{1}{|V|-d_v}. \]
 \end{theorem}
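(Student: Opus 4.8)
The plan is to derive the clique bound from the independent-set form of the Caro--Wei inequality applied to the complement graph. Write $\bar G$ for the complement of $G$ on the same vertex set $V$; then a clique of $G$ is precisely an independent set of $\bar G$, and the degree of $v$ in $\bar G$ is $|V|-1-d_v$. Hence it suffices to prove the following self-contained statement: any simple loopless graph $H=(V,E)$ with degree sequence $(e_v)_{v\in V}$ contains an \emph{independent set} of size at least $\sum_{v\in V} 1/(e_v+1)$. Applying this to $H=\bar G$ with $e_v=|V|-1-d_v$ yields a clique of $G$ of size at least $\sum_{v\in V} 1/\bigl((|V|-1-d_v)+1\bigr)=\sum_{v\in V} 1/(|V|-d_v)$, which is the claim.

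For the independent-set statement I would use the random-permutation (greedy-ordering) argument. Choose a permutation $\pi$ of $V$ uniformly at random, and let $I_\pi$ be the set of vertices $v$ that precede, under $\pi$, all of their neighbours in $H$. The first step is to check that $I_\pi$ is independent: if $u,v\in I_\pi$ were adjacent, then $u$ would precede $v$ and $v$ would precede $u$, which is impossible. The second step is to compute $\Pr[v\in I_\pi]$ for a fixed $v$: this event depends only on the relative order under $\pi$ of the $e_v+1$ vertices of $\{v\}\cup N(v)$ (these are distinct because $H$ has no loops), and it holds exactly when $v$ is the first of them, so by symmetry $\Pr[v\in I_\pi]=1/(e_v+1)$. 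The third step is linearity of expectation: $\mathbb{E}|I_\pi|=\sum_{v\in V}\Pr[v\in I_\pi]=\sum_{v\in V}1/(e_v+1)$, so some permutation $\pi$ gives $|I_\pi|\geq \sum_{v\in V}1/(e_v+1)$, providing an independent set of the required size.

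I do not expect a genuine obstacle here, since this is essentially the proof attributed to Caro and Wei cited above; the only points that need care are the independence of $I_\pi$ and the probability computation (in particular, that the relevant $e_v+1$ vertices are distinct, which is where loop-freeness enters), together with getting the bookkeeping of the complementation right so that the degree shift $d_v\mapsto |V|-1-d_v$ produces exactly the denominator $|V|-d_v$. An alternative, if one prefers to avoid the probabilistic language, is a short induction on $|V|$ in which one deletes a vertex of minimum degree together with its neighbourhood; I would fall back on this only if a purely combinatorial argument were felt to be more in keeping with the rest of the paper.
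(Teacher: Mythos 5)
Your argument is correct: passing to the complement turns the claim into the standard independent-set form of the Caro--Wei bound, and your random-permutation computation of $\Pr[v\in I_\pi]=1/(e_v+1)$ together with linearity of expectation is exactly the proof the paper points to in Alon--Spencer (the paper itself only cites that proof rather than reproducing it). No gaps; the bookkeeping $e_v=|V|-1-d_v$ giving the denominator $|V|-d_v$ is handled correctly.
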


 By using a convexity argument and the fact that $\sum_{v\in V} d_v
 = 2|E|$, it can be shown that Theorem~\ref{10:th:Turansum} implies
 Theorem~\ref{10:th:Turan}, and that they give the same result for regular graphs,
 i.e.,
if all vertices have equally many neighbours. If we thus apply our
construction with a code $C$ for which the number of codewords at a
given distance from a word ${\bf c}\in C$ actually depends on the
choice of ${\bf c}$, we may improve our main result. For describing
this improvement, we introduce the following notion:  for a given
code $C$ and ${\bf c}\in C$, the local distance enumerator $B_{{\bf
c}}(x)$ is defined as
\[ B_{{\bf c}}(x) = \sum_{j=0}^m |\{ {\bf y}\in C\mid d({\bf c},{\bf
y}) = j\}| x^j. \]
\begin{theorem}\label{10:th:LTgen} Let $C\subseteq Q^m$. For each $\delta\in (0,1)$ and $x\in (0,1)$, we have
\[ \alpha_q(\delta) \geq   \frac{1}{m}\log \left( \sum_{{\bf
c}\in C} \frac{1}{B_{\bf c}(x)} \right) + \delta \log_q x . \]
\end{theorem}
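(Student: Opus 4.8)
The plan is to mimic the proof of Theorem~\ref{10:th:main}, but replace the application of Tur\'{a}n's theorem (Theorem~\ref{10:th:Turan}) by the Caro--Wei sharpening (Theorem~\ref{10:th:Turansum}). Fix an integer $n$ and set $d=\lceil \delta m n\rceil$. Consider the graph $G$ with vertex set $C^n$ in which two vertices are joined precisely when their Hamming distance is at least $d$. For a vertex ${\bf v}=({\bf v}_1,\ldots,{\bf v}_n)\in C^n$, the number of non-neighbours of ${\bf v}$ (other than ${\bf v}$ itself, plus ${\bf v}$) is the number of ${\bf w}\in C^n$ with $d({\bf v},{\bf w})\le d-1$, so $|C^n|-d_{\bf v}=\sum_{j=0}^{d-1} N_j({\bf v})$, where $N_j({\bf v})=|\{{\bf w}\in C^n\mid d({\bf v},{\bf w})=j\}|$. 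The generating function $\sum_{j\ge 0} N_j({\bf v})x^j$ factorises as $\prod_{i=1}^n B_{{\bf v}_i}(x)$, since the distance splits coordinate-block by coordinate-block.

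Next I would bound the partial sum $\sum_{j=0}^{d-1}N_j({\bf v})$ using Lemma~\ref{10:lem:Bur}: for $x\in(0,1]$ it is at most $x^{-(d-1)}\prod_{i=1}^n B_{{\bf v}_i}(x)$. Hence, by Theorem~\ref{10:th:Turansum}, $G$ contains a clique $D$ of size at least
\[ \sum_{{\bf v}\in C^n}\frac{1}{|C^n|-d_{\bf v}}
   \ \geq\ \sum_{{\bf v}\in C^n} \frac{x^{d-1}}{\prod_{i=1}^n B_{{\bf v}_i}(x)}
   \ =\ x^{d-1}\prod_{i=1}^n\Bigl(\sum_{{\bf c}\in C}\frac{1}{B_{\bf c}(x)}\Bigr), \]
where the last equality is the standard "sum of products equals product of sums" factorisation over the $n$ coordinate blocks, valid because the summand depends multiplicatively on the blocks ${\bf v}_1,\ldots,{\bf v}_n$. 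This clique $D$ is a $q$-ary code of length $mn$ with minimum distance at least $d\ge \delta mn$.

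Taking $\frac{1}{mn}\log_q$ of the cardinality bound and using $d-1 \ge \delta m n - 1$ together with $x\in(0,1)$ (so $\log_q x<0$, hence $\frac{d-1}{mn}\log_q x \ge \delta\log_q x - \tfrac{1}{mn}\log_q x$), we get
\[ \frac{1}{mn}\log_q|D| \ \geq\ \frac{1}{m}\log_q\Bigl(\sum_{{\bf c}\in C}\frac{1}{B_{\bf c}(x)}\Bigr) + \frac{d-1}{mn}\log_q x . \]
Letting $n\to\infty$ along the codes $D$ so obtained, the error term $\tfrac{1}{mn}\log_q x$ vanishes and $\frac{d-1}{mn}\to\delta$, yielding $\alpha_q(\delta)\ge \frac{1}{m}\log_q\bigl(\sum_{{\bf c}\in C} 1/B_{\bf c}(x)\bigr)+\delta\log_q x$, as claimed. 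The only slightly delicate point is the interchange of the Caro--Wei sum with the product over blocks; everything else is a direct adaptation of the earlier argument. Note that when $B_{\bf c}(x)$ is the same for every ${\bf c}\in C$ (equal to $B(x)$), the sum equals $|C|/B(x)$ and we recover Theorem~\ref{10:th:main}, consistent with the remark that Caro--Wei and Tur\'{a}n coincide on regular graphs.
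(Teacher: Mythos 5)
Your proof is correct and follows essentially the same route as the paper: apply the Caro--Wei bound (Theorem~\ref{10:th:Turansum}) to the graph on $C^n$, bound the number of non-neighbours of each vertex via Lemma~\ref{10:lem:Bur}, and use the multiplicativity $B_{{\bf c}}^{(n)}(x)=\prod_i B_{{\bf c}_i}(x)$ to factor the Caro--Wei sum into $x^{d-1}\left(\sum_{{\bf c}\in C} 1/B_{\bf c}(x)\right)^n$. One parenthetical at the end has its inequality reversed --- from $d-1\ge \delta mn-1$ and $\log_q x<0$ you may conclude $\frac{d-1}{mn}\log_q x \le \delta\log_q x-\frac{1}{mn}\log_q x$, not $\ge$; the bound you actually want is $d-1\le\delta mn$, which gives $\frac{d-1}{mn}\log_q x\ge\delta\log_q x$ directly --- but this is immaterial, since your closing limit argument $\frac{d-1}{mn}\to\delta$ already carries the conclusion.
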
 Note that Theorem~\ref{10:th:LTgen} reduces to
Theorem~\ref{10:th:main} if all local distance distributions are
equal.
\begin{proof}
We apply Theorem~\ref{10:th:Turansum} to the graph with vertex set
$C^n$, in which two vertices are joined by an edge if and only if
they have Hamming distance at least $d$. In this way, we infer the
existence of a code of length $mn$ and minimum Hamming distance at
least $d$ of size at least
\[ \sum_{{\bf c}\in C^n} \frac{1}{|\{ {\bf y}\in C^n \mid
d({\bf c},{\bf y}) \leq d \}| }\] For each {\bf c}$\in C^n$ and
$x\in (0,1]$, we have that
\[ |\{y \in C^n \mid d({\bf c},{\bf y})\leq d\}| \leq \frac{B_{\bf c}^{(n)}(x)}{x^d}, \]
where $B_{\bf c}^{(n)}(x) = \sum_{j} |\{ {\bf y}\in C^n\mid d({\bf
c},{\bf y})=j\}| x^j. $ It is easy to see that for ${\bf c}=({\bf
c}_1,{\bf c}_2,\ldots ,{\bf c}_n)\in C^n$, we have that
 \[ B_{{\bf c}}^{(n)}(x) = \prod_{i=1}^n B_{{\bf c}_i}(x), \]
 and so there exists a code of length $mn$ and size at least
 \[ \sum_{{\bf c}={\bf c}_1,\ldots, {\bf c}_n \in C^n }\frac{x^d}{B_{{\bf
 c}}(x)} = x^d \sum_{({\bf c}_1,\ldots ,{\bf c}_n)\in C^n}
 \frac{1}{\prod_{i=1}^n B_{{\bf c}_i}(x)} = x^d \left( \sum_{{\bf
 c}\in C} \frac{1}{B_{\bf c}(x)} \right)^n, \]
 where the final equality can be proved by induction on $n.
 \;\;\;\Box$
 \end{proof}
\begin{lemma}\label{10:lem:LTgen}
Theorem~\ref{10:th:LTgen} improves on the asymptotic GV bound for
some $\delta \in (0,1-\frac{1}{q})$ if and only if for some $z\in
(0,1)$,
\[ \sum_{\bf c\in C} \frac{1}{(1+(q-1)z)^m B_{\bf c}\left(
\frac{1-z}{1+(q-1)z}\right)} > 1 . \]
\end{lemma}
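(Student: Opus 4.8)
The plan is to mirror the derivation of Lemma~\ref{10:th:condition}, since Theorem~\ref{10:th:LTgen} differs from Theorem~\ref{10:th:main} only in that the $\delta$-free factor $|C|/B(x)$ is replaced by $\sum_{{\bf c}\in C}1/B_{\bf c}(x)$. For $x\in(0,1)$ and $\delta\in(0,1-\frac{1}{q})$ I would set
\[ \Phi(x,\delta)=\frac{1}{m}\log_q\left(\sum_{{\bf c}\in C}\frac{1}{B_{\bf c}(x)}\right)+\delta\log_q x-\left(1-h_q(\delta)\right), \]
the analogue of the function $F$ used to obtain Lemma~\ref{10:th:condition}. By Theorem~\ref{10:th:LTgen}, $\alpha_q(\delta)\geq(1-h_q(\delta))+\Phi(x,\delta)$ for every such $x$, so Theorem~\ref{10:th:LTgen} beats the asymptotic GV bound for some $\delta\in(0,1-\frac{1}{q})$ if and only if $\Phi(x,\delta)>0$ for some admissible pair $(x,\delta)$, i.e.\ if and only if $\sup_{x,\delta}\Phi>0$.

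The first step is to reduce this two-parameter supremum to a one-parameter one along the curve $x=x_\delta:=\delta/((q-1)(1-\delta))$, exactly as in the discussion preceding Lemma~\ref{10:th:condition}. For fixed $x$, the map $\delta\mapsto\Phi(x,\delta)$ is concave because $h_q$ is, and $\partial\Phi/\partial\delta=\log_q x+h_q^{\prime}(\delta)$ vanishes precisely when $x=x_\delta$; since $h_q^{\prime}$ is a decreasing bijection from $(0,1-\frac{1}{q})$ onto $(0,\infty)$ while $-\log_q x$ ranges over $(0,\infty)$ as $x$ ranges over $(0,1)$, this critical point lies in the admissible interval and is the $\delta$-maximum. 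As $\delta\mapsto x_\delta$ is in turn a bijection from $(0,1-\frac{1}{q})$ onto $(0,1)$, it follows that $\sup_{x,\delta}\Phi(x,\delta)=\sup_{\delta\in(0,1-1/q)}\Phi(x_\delta,\delta)$. Using the identity $\delta\log_q x_\delta+h_q(\delta)=-\log_q(1-\delta)$ already established for Lemma~\ref{10:th:condition}, I then obtain
\[ \Phi(x_\delta,\delta)=\frac{1}{m}\log_q\left(\sum_{{\bf c}\in C}\frac{1}{B_{\bf c}(x_\delta)}\right)-1-\log_q(1-\delta)=\frac{1}{m}\log_q\left(\frac{1}{q^m(1-\delta)^m}\sum_{{\bf c}\in C}\frac{1}{B_{\bf c}(x_\delta)}\right), \]
so that $\Phi(x_\delta,\delta)>0$ if and only if $\sum_{{\bf c}\in C}\frac{1}{q^m(1-\delta)^m B_{\bf c}(x_\delta)}>1$.

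Finally I would carry out the substitution $z=1-\frac{q}{q-1}\delta$ from the proof of Theorem~\ref{10:th:toobad}: it is a bijection from $(0,1-\frac{1}{q})$ onto $(0,1)$ under which $q(1-\delta)=1+(q-1)z$ and $x_\delta=(1-z)/(1+(q-1)z)$. Substituting turns the last inequality into $\sum_{{\bf c}\in C}\frac{1}{(1+(q-1)z)^m B_{\bf c}\left(\frac{1-z}{1+(q-1)z}\right)}>1$ for some $z\in(0,1)$, which together with the reformulation of the first paragraph is exactly the claimed equivalence.

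I do not expect a deep obstacle here; the only point requiring care is the reduction to the curve $x=x_\delta$, where one must check that the $\delta$-critical point is a genuine (interior) maximum — this is the concavity of $h_q$ — and that it stays in $(0,1-\frac{1}{q})$ for every $x\in(0,1)$, equivalently that $x_\delta$ sweeps out all of $(0,1)$ and in particular $x_\delta<1$ whenever $\delta<1-\frac{1}{q}$, so that Theorem~\ref{10:th:LTgen} is indeed applicable at $x=x_\delta$. Everything else is the same bookkeeping as for Lemma~\ref{10:th:condition} and Theorem~\ref{10:th:toobad}, now with $|C|/B(x)$ replaced throughout by $\sum_{{\bf c}}1/B_{\bf c}(x)$; as a sanity check, at $z=0$ the left-hand side equals $\sum_{{\bf c}}1/|C|=1$, consistent with the strict inequality being sought on the open interval $(0,1)$.
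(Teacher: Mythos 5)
Your proof is correct and is exactly the argument the paper intends: it reruns the derivation of Lemma~\ref{10:th:condition} and the substitution $z=1-\frac{q}{q-1}\delta$ from Theorem~\ref{10:th:toobad} with $|C|/B(x)$ replaced by $\sum_{{\bf c}\in C}1/B_{\bf c}(x)$, which is precisely what the paper's one-line proof (``similar to the proof of Lemma~\ref{10:th:condition} and Theorem~\ref{10:th:toobad}'') defers to. Your explicit justification that the $\delta$-critical point is an interior maximum and that $\delta\mapsto x_\delta$ is a bijection onto $(0,1)$ is a welcome filling-in of details the paper leaves implicit.
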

\begin{proof}
Similar to the proof of Lemma~\ref{10:th:condition} and
Theorem~\ref{10:th:toobad}. $\;\;\;\;\qed$
\end{proof}
We have not been able to decide if there exist codes for which the
inequality in Lemma~\ref{10:lem:LTgen} is met. We note that for
certain codes $C$, e.g.\ for
 $C=\{0,1\}^3 \setminus \{(0,0,0)\}$, the left-hand side of the
 condition in Lemma~\ref{10:lem:LTgen} is monotonically decreasing in $z$,
 although not all individual terms have this property.

\end{document}